\newtheorem{theorem}{Theorem}
\newtheorem{lemma}{Lemma}
\newtheorem{definition}{Definition}
\newtheorem{specification}{Specification}
\newcommand{\ie}{{\em i.e.,}\xspace}
\newcommand{\eg}{{\em e.g.,}\xspace}
\begin{document}

\title{The Next 700 Impossibility Results in Time-Varying Graphs}

\renewcommand*{\thefootnote}{\fnsymbol{footnote}}

\author{
Nicolas Braud-Santoni\footnotemark[1]
\and
Swan Dubois\footnotemark[2]
\and
Mohamed-Hamza Kaaouachi\footnotemark[2]
\and 
Franck Petit\footnotemark[2] $^,$\footnotemark[3]
}

\footnotetext[1]{IAIK, TU Graz, Graz, Austria}
\footnotetext[2]{Sorbonne Universit\'es, UPMC Universit\'e Paris 6, F-75005, Paris, France\\
CNRS, UMR 7606, LIP6, F-75005, Paris, France\\
Inria, \'Equipe-projet REGAL, F-75005, Paris, France}
\footnotetext[3]{Contact author, e-mail: franck.petit@lip6.fr}

\renewcommand*{\thefootnote}{\arabic{footnote}}
\setcounter{footnote}{0}

\date{}
\maketitle

\begin{abstract}
We address highly dynamic distributed systems modeled by time-varying graphs (TVGs). We interest in proof of impossibility results that often use informal arguments about convergence. First, we provide a distance among TVGs to define correctly the convergence of TVG sequences. Next, we provide a general framework that formally proves the convergence of the sequence of executions of any deterministic algorithm over TVGs of any convergent sequence of TVGs. Finally, we illustrate the relevance of the above result by proving that no deterministic algorithm exists to compute the underlying graph of any connected-over-time TVG, \ie any TVG of the weakest class of long-lived TVGs.
\end{abstract}

\section{Introduction}\label{sec:intro}

The availability of wireless communications has drastically increased in recent years and established new applications
that make various communicating agents and terminals (\eg robots, sensors, Unmanned Aerial Vehicles, ...) interact
together. A common feature of a vast majority of these networks is their {\em high
dynamic}, meaning that their topology keeps continuously changing over time.  
Classically, distributed systems are
modeled by a static undirected connected graph where vertices are processes and edges represent bidirectional
communication links.  Clearly, such modeling is not suitable for high dynamic networks.  

Numerous models taking in account topological changes over time have have been proposed since several decades, to
quote only a few, \cite{AKMUV12,AE84,CCF09,F03,F04,FGM07,SW09}. Some works aim at unifying most of the above
approaches. For instance, in~\cite{XFJ03}, the authors introduced the {\em evolving graphs}. They proposed modeling
the time as a sequence of discrete time instants and the system dynamic by a sequence of static graphs, one for each
time instant. More recently, another graph formalism, called {\em Time-Varying Graphs} (TVG), has been provided
in~\cite{CFQS12}. In contrast with evolving graphs, TVGs allow systems evolving within continuous time. Also
in~\cite{CFQS12} and in companion papers~\cite{CFMS10,CFMS12}, TVGs are gathered and ordered into classes depending
mainly on two main features: the quality of connectivity among the participating nodes and the
possibility/impossibility to perform tasks.

In distributed computing, impossibility results are difficult to prove formally. As an example, \emph{informal} arguments about convergence properties of sequences of objects (\eg graphs, executions, ...) are often used to prove such results (by building counter examples), that may weaken our confidence in their accuracy. In this paper, we propose a general framework that would help us for proving \emph{formally} impossibility results in TVGs. We first define a metric to compute a distance between any pair of TVGs based on the length of their longest common temporal prefix. Such distance allows to study the convergence of TVG sequences. Our main result consists in showing that, given an algorithm $\mathcal{A}$ designed for any TVG and a sequence of TVGs that converges toward a TVG $g$, then the sequence of executions of $\mathcal{A}$ on each TVG of the sequence also converges. Furthermore, the latter converges toward the execution of $\mathcal{A}$ over $g$. 

Next, we provide an example of use of this general result. We consider the weakest class of long-lived TVGs, in the
following referred to as \emph{connected-over-time} TVGs, \ie the class of TVGs where any node can contact any other node infinitely often. It can also be described as the family of TVGs such that the eventual underlying graph (\ie the subgraph encompassing all edges that are \emph{infinitely often} present) is connected. More precisely, we show that no deterministic algorithm exists to compute the eventual underlying graph of a connected-over-time TVG. This impossibility result intuitively comes from the fact that, with such an algorithm, no node is able to determine whether any of its adjacent edges (appearing/disappearing arbitrarily along the time) may disappear definitively or not.

Section~\ref{sec:TVG} presents the model. Our main result is presented in Section~\ref{sec:main}, followed by the impossibility result about underlying graph computation over the class of connected-over-time TGVs~ (Section~\ref{sec:COT}). Section~\ref{sec:conc} concludes this work. 

\section{Time-Varying Graph: Model}\label{sec:TVG}

This section aims to present formally the framework of our study of dynamic systems: time-varying graphs (TVGs). This model was introduced by \cite{CFQS12}. We present only definitions needed for the comprehension of our work.  We refer the reader to \cite{CFQS12} for more details 
and an interesting taxonomy of TVGs. 

\paragraph{Model.}\label{sub:model}

Let us first borrow the formalism introduced in~\cite{CFQS12} in order to describe the distributed systems prone to high dynamic. We consider {\em distributed systems} made of $n$ computing entities, henceforth indifferently referred to as {\em nodes}, {\em vertices}, or {\em processes}. A process has a local memory, a local sequential and deterministic algorithm, and input\slash output capabilities. All these entities are gathered in a set $V$.  Let $E$ be a set of edges (or relations) between pairwise entities, that describes interactions between processes, namely communication exchanges. The presence of an edge between two vertices $p$ and $q$ at a given time $t$ means that each vertex among $\{p,q\}$ is able to send a message to the other at $t$.

The interactions between processes are assumed to take place over a time span $\mathcal{T} \subseteq \mathbb{T}$ called the {\em lifetime} of the system. The temporal domain $\mathbb{T}$ is generally assumed to be either $\mathbb{N}$ (discrete-time systems) or $\mathbb{R}^+$ (continuous-time systems).

\begin{definition}[Time-varying graph \cite{CFQS12}]
\label{def:TVG}
A time-varying graph (TVG for short) $g$ is a tuple $(V,E,\mathcal{T},\rho,\zeta,\phi)$ where $V$ is a (static) set of vertices $\{v_1,\ldots,v_n\}$, $E$ a (static) set of edges between these vertices $E\subseteq V\times V$, $\rho:E\times\mathcal{T}\to\{0,1\}$ (called presence function) that indicates whether a given edge is available (\emph{i.e.} present) at a given time, $\zeta:E\times\mathcal{T}\rightarrow \mathbb{T}$ (called edge latency function) indicates the time it takes to cross a given edge if starting at a given date, and $\phi:V\times\mathcal{T}\rightarrow \mathbb{T}$ (called process latency function) indicates the time an internal action of a process takes at a given date.
\end{definition}

Given a TVG $g$, let $\mathcal{T}_g$ be the subset of $\mathcal{T}$ for which a topological event (appearance/disappearance of an edge) occurs in $g$. The evolution of $g$ during its lifetime $\mathcal{T}$ can be described as the sequence of graphs $\mathcal{S}_{g} = g_1, g_2,\ldots$, where $g_i=(V,E_i)$ corresponds to the static {\em snapshot} of $g$ at time $t_i \in \mathcal{T}_g$, \ie $e\in E_i$ if and only if $\forall t\in[t_i,t_{i+1}[,\rho(e,t) = 1$. Note that, by definition, $g_i \neq g_{i+1}$ for any $i$. 

We consider {\em asynchronous} distributed systems, \ie no pair of processes has access to any kind of shared device that could allow to synchronize their execution rate.  Furthermore, at any time, no process has access to the output of $\zeta$, \ie none of them can ({\em a priori}) predict a bound on the message delay. Note that the ability to send a message to another process at a given time does not mean that this message will be delivered. Indeed, the dynamicity of the communication graph implies that the edge between the two processes may disappear before the delivery of this message leading to the lost of messages in transit. 

The presences and absences of an edge are instantly detected by its two adjacent nodes. We assume that our system provides to each process a non-blocking communication primitive named \textbf{Send\_retry} that ensures the following property. When a process $p$ invokes \textbf{Send\_retry}$(m,q)$ (where $m$ is an arbitrary message and $q$ another process of $V$) at time $t$, this primitive delivers $m$ to $q$ in a finite time provided that there exists a time $t'\geq t$ such that the edge $\{p,q\}$ is present at time $t'$ during at least $\zeta(\{p,q\},t')$ units of time. In other words, the delivery of the message is ensured if there is, after the invocation of the primitive, an availability of the edge that is sufficient to overcome the communication delay of the edge at this time. Note that this primitive may never deliver a message (\emph{e.g.} if the considered edge never appears after invocation). Details of the implementation of this primitive are not considered here but it typically consists in resending $m$ at each apparition of the edge $\{p,q\}$ until its reception by $q$. This primitive allows us to abstract from topology changes and asynchronous communication and to write high-level algorithms.

\paragraph{Configurations and executions.} The \emph{state} of a process is defined by the values of its variables. Given a TVG $g$, a \emph{configuration} of $g$ is a vector of $n+2$ components $(g_i, M_i, p_1, p_2, \ldots, p_n)$ such that $g_i$ is a static snapshot of $g$ (\ie $g_i \in \mathcal{S}_{g}$), $M_i$ is the set of multi-sets of messages carried over $E_i$, and $p_1$ to $p_n$ represent the state of the $n$ processes in $V$. 

An {\em execution} of the distributed system modeled by $g$ is a sequence of configurations $e=\gamma_{0}, \ldots ,\gamma_{k},$ $\gamma_{k+1}, \ldots$, such that for each $k\geq 0$, during an execution step $(\gamma_k,\gamma_{k+1})$, one of the following event occurs: $(i)$ $g_{k} \neq g_{k+1}$, or $(ii)$ at least one process receives a message, sends a message, or executes some internal actions changing its state. The \emph{algorithm} executed by $g$ describes the set of all allowed internal actions of processes (in function of their current state or external events as message receptions or time-out expirations) during an execution of $g$. We assume that during any configuration step $(\gamma_{k},\gamma_{k+1})$ of an execution, if  $g_{k} \neq g_{k+1}$, then for each edge $e$ such that $e \in E_k$  and $e \notin E_{k+1}$ (\ie $e$ disappears during the step $(\gamma_k,\gamma_{k+1}$), none of the messages carried by $e$ belongs to $M_{k+1}$. Also, for each edge $e$ such that $e \in E_{k+1}$  and $e \notin E_{k}$ (\ie $e$ appears during the step $(\gamma_k,\gamma_{k+1})$), $e$ contains no message in configuration $\gamma_{k+1}$. 

\paragraph{Connected over time TVGs.} A key concept of time-varying graphs has been identified in~\cite{CFQS12}. The authors shows that the classical notion of path in static graphs in meaningless in TVGs. Indeed, some processes may communicate even if there is no (static) path between them at each time. To perform communication between two processes, the existence of a \emph{temporal path} between them is sufficient. They define such a temporal path as follows: a sequence of ordered pairs $\mathcal{J}=\{(e_1,t_1),(e_2,t_2),...,(e_k,t_k)\}$ such that $\{e_1,e_2,...,e_k\}$ is a path\footnote{A sequence of edges $\{v_1,v'_1\}, \{v_2,v'_2\}, \ldots, \{v_k,v'_k\}$ is a \emph{path} if  $\forall i \in\{1,k-1\}, v_{i+1} = v'_{i}$.} if for every $i \in [1,k]$, $\rho(e_i,t_i)=1$ and $t_{i+1} \geq t_i + \zeta(e_i,t_i)$. In other words, a temporal path from process $p$ to process $q$ is a sequence of adjacent edges from $p$ to $q$ such that availability and latency of edges allow the sending of a message from $p$ to $q$ using the \textbf{Send\_retry} primitive at each intermediate process (refer to \cite{CFQS12} for a formal definition). Note that a temporal path is a non symmetric relation between two processes.

Based on various assumptions made about temporal paths (\eg recurrence, periodicity, symmetry, and so on), the authors propose in~\cite{CFQS12} proposes a relevant hierarchy of TVG classes. In this paper, we choose to make minimal assumptions on the dynamicity of our system since we restrict ourselves on \emph{connected-over-time} TVGs defined as follows: 

\begin{definition}[Connected-over-time TVG \cite{CFQS12}]\label{def:COT}
A TVG $(V,E,\mathcal{T},\rho,\zeta,\phi)$ is connected-over-time if, for any time $t\in\mathcal{T}$ and for any pair of processes $p$ and $q$ of $V$, there exists a temporal path from $p$ to $q$ after time $t$. The class of connected-over-time TVGs is denoted by $\mathcal{COT}$\footnote{Authors of \cite{CFQS12} refer to this class as C5 in their hiearchy of TVG classes.}.
\end{definition}

Note that the lifetime of a connected-over-time TVG is necessarily infinite by definition. The class $\mathcal{COT}$ allows us to capture highly dynamic systems since we only require that any process will be always able to communicate with any other one without any supplementary assumption on this communication (such as delay, periodicity, or used route). In particular, note that a connected-over-time TVG may be disconnected at each time and that the presence of an edge at a given time does not preclude that this edge will appear again after this time.  Define an \emph{eventual missing edge} as en edge that appears only a finite number of time during the lifetime of the TVG. The main difficulty encountered in the design of distributed algorithms in $\mathcal{COT}$ is to deal with such eventual missing edges because no process is able to predict if a given adjacent edge is an eventual missing edge or not. Note that the time of the last presence of such an eventual missing edge cannot be even bounded.

\begin{definition}[(Eventual) Underlying Graph]\label{def:UG}
Given a TVG $g=(V,E,\mathcal{T},\rho,\zeta,\phi)$, the underlying graph of a $g$ is the (static) graph $U_g=(V,E)$. The eventual underlying graph of $g$ is the (static) subgraph $U^\omega_g=(V,E^\omega_g)$ with $E^\omega_g=E \setminus M_g$, where $M_g$ is the set of eventual missing edges of $g$. 
\end{definition}

Intuitively, the underlying graph (sometimes referred to as {\em footprint}) of a TVG $g$ gathers all edges that appear at least once during the lifetime of $g$, whereas the eventual underlying graph of $g$ gathers all edges that are infinitely often present during the lifetime of $g$. Note that, for any TVG of $\mathcal{COT}$, both underlying graph and eventual underlying graph are connected by definition. Let us define the \emph{neighborhood} $\mathcal{N}_p$ of a process $p$ is the set of processes with which $p$ shares an edge in the underlying graph.

\paragraph{Induced subclasses.}

In the following, we focus on specific subclasses of the class $\mathcal{COT}$ to establish our impossibility result. Informally, we focus on subclasses that gather all TVGs whose underlying graph belongs to a given set. The intuition behind this restriction is the following. In practice, some technical reasons may restrict or prevent the communication between some processes, that induces a given underlying graph for the TVG that models our system. In contrast, we cannot predict in general the availabilities and latencies of communication edges, that leads us to consider all TVGs sharing this underlying graph.

\begin{definition}[Induced subclass]\label{def:inducedsubclass}
Given a set of (static) graphs $\mathcal{F}$ and a class of TVGs $\mathcal{C}$, the subclass of $\mathcal{C}$ induced by $\mathcal{F}$ (denoted by $\mathcal{C}|_\mathcal{F}$) is the set of all TVGs of $\mathcal{C}$ whose underlying graph belongs to $\mathcal{F}$.
\end{definition}

The two following results follow directly from Definitions \ref{def:COT} and \ref{def:inducedsubclass}:

\begin{lemma}\label{lem:lem5}
In any induced subclass $\mathcal{C}|_\mathcal{F}$, if a TVG $g\in\mathcal{G}$ admits $f\in\mathcal{F}$ as underlying graph, then any other TVG of $\mathcal{C}$ that admits $f$ as underlying graph belongs to $\mathcal{C}|_\mathcal{F}$.
\end{lemma}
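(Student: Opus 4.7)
The statement is essentially a direct unfolding of Definition~\ref{def:inducedsubclass}, so my plan is to make that explicit in two short moves rather than invent any machinery.

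First, I would clarify the hypothesis: reading ``$g\in\mathcal{G}$'' as $g\in\mathcal{C}|_\mathcal{F}$ (the only reading consistent with the lemma's premise), Definition~\ref{def:inducedsubclass} tells us that membership in $\mathcal{C}|_\mathcal{F}$ is equivalent to the conjunction of two conditions: (i) belonging to $\mathcal{C}$, and (ii) having an underlying graph that lies in $\mathcal{F}$. Thus from $g\in\mathcal{C}|_\mathcal{F}$ we extract only the information that the particular graph $f$ serving as underlying graph of $g$ is an element of $\mathcal{F}$; the rest of $g$'s structure (its presence function, latency functions, lifetime) is irrelevant to what follows.

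Second, I would take an arbitrary $g'\in\mathcal{C}$ whose underlying graph is $f$, as in the hypothesis, and check conditions (i) and (ii) for $g'$ against Definition~\ref{def:inducedsubclass}. Condition (i) is immediate, since $g'\in\mathcal{C}$ by assumption. Condition (ii) follows because the underlying graph of $g'$ is $f$, and $f\in\mathcal{F}$ was established in the previous step. Hence $g'\in\mathcal{C}|_\mathcal{F}$, which is the conclusion.

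There is no real obstacle here: the statement is purely a restatement of the definition, highlighting that $\mathcal{C}|_\mathcal{F}$ is closed under swapping a TVG with any other $\mathcal{C}$-member sharing its underlying graph. The only point requiring a small amount of care is to disentangle the (slightly ambiguous) notation ``$g\in\mathcal{G}$'' in the premise and confirm it refers to $\mathcal{C}|_\mathcal{F}$; otherwise the proof is a single application of the definition. I would keep the write-up to a couple of sentences so as not to over-advertise a tautology that will be used later as a convenient indexing device for impossibility arguments.
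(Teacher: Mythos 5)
Your proof is correct and matches the paper's treatment: the paper gives no explicit proof, stating only that the lemma "follows directly from Definitions \ref{def:COT} and \ref{def:inducedsubclass}," which is exactly the one-step unfolding you carry out. Your remark on the ambiguous "$g\in\mathcal{G}$" (which should read $g\in\mathcal{C}|_\mathcal{F}$ or $g\in\mathcal{C}$, since $\mathcal{G}$ is only defined later in Section~\ref{sub:spaces}) is a fair catch, and your resolution is the intended reading.
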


\begin{lemma}\label{lem:lem6}
No TVG of $\mathcal{COT}|_\mathcal{F}$ admits an eventual missing edge if and only if $\mathcal{F}$ contains only  trees. 
\end{lemma}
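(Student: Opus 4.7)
The plan is to prove both directions of the equivalence, with the substantive work concentrated in the reverse implication.

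For the direction ``$\mathcal{F}$ contains only trees implies no eventual missing edge,'' I would argue by contradiction. Fix any $g \in \mathcal{COT}|_\mathcal{F}$; by hypothesis its underlying graph $U_g$ is a tree. Suppose $e = \{u,v\}$ is an eventual missing edge of $g$, so there is a time $t^\star$ after which $\rho(e,\cdot) \equiv 0$. Definition~\ref{def:COT} forces the existence of a temporal path from $u$ to $v$ whose first hop occurs at time $\geq t^\star$, and the underlying edge sequence of that temporal path is a walk from $u$ to $v$ in $U_g$. Since $e$ is a bridge of the tree $U_g$ (removing it disconnects $u$ from $v$), every such walk must include $e$; hence the temporal path contains some pair $(e,t')$ with $t' \geq t^\star$ and $\rho(e,t') = 1$, contradicting the choice of $t^\star$.

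For the reverse direction, I would proceed by contrapositive: supposing $\mathcal{F}$ contains a non-tree graph $f$ (which must be connected for $\mathcal{COT}|_\mathcal{F}$ to contain any TVG on that footprint), I exhibit a $g \in \mathcal{COT}|_\mathcal{F}$ admitting an eventual missing edge. Being connected and not acyclic, $f$ contains a cycle; pick any edge $e$ on that cycle, so that $f \setminus \{e\}$ is still connected. Define $g = (V_f, E_f, \mathbb{N}, \rho, \zeta, \phi)$ with $\zeta \equiv 1$, $\phi$ arbitrary, every edge of $f \setminus \{e\}$ permanently present under $\rho$, and $e$ present only at time $0$. Three checks remain: (i) $U_g = f$, since every edge of $f$ appears at least once; (ii) $g \in \mathcal{COT}$, because for every $t \in \mathbb{N}$ and every pair $(p,q)$, any static path $p = v_0, \ldots, v_k = q$ in the connected graph $f \setminus \{e\}$ yields the temporal path $(\{v_{i-1},v_i\},\,t+i-1)_{1 \leq i \leq k}$; (iii) $e$ is an eventual missing edge, since it appears only once. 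Lemma~\ref{lem:lem5} then places $g$ in $\mathcal{COT}|_\mathcal{F}$, completing the contrapositive.

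The only genuine subtlety is the verification of the $\mathcal{COT}$ property in the construction: the schedule must supply a temporal path \emph{starting from every time} to every pair of processes, which is why I keep every non-excluded edge permanently available and use an unbounded integer lifetime. I do not anticipate further obstacles; the forward direction is essentially a bridge argument in a tree, and the reverse direction is an explicit, one-shot construction.
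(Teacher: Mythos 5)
Your proof is correct, and it actually does more work than the paper, which offers no proof at all: Lemmas~\ref{lem:lem5} and~\ref{lem:lem6} are merely asserted to ``follow directly from Definitions~\ref{def:COT} and~\ref{def:inducedsubclass}.'' Both of your directions are the natural arguments one would expect the authors to have in mind: the forward direction is the bridge/tree observation that a temporal path between the endpoints of an eventual missing edge must traverse that very edge after its last appearance, and the reverse direction is the explicit schedule in which all edges of a spanning connected subgraph are permanently present and one cycle edge appears exactly once. Your verification of the $\mathcal{COT}$ property (a temporal path starting at every time $t$, using $\zeta\equiv 1$) is the one detail genuinely worth writing down, and you handle it correctly. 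One caveat, which is a defect of the lemma statement rather than of your argument: if $\mathcal{F}$ contains a \emph{disconnected} non-tree graph (\eg a forest with two components) and otherwise only trees, then $\mathcal{COT}|_\mathcal{F}$ contains no TVG with that footprint at all, the left-hand side of the equivalence holds vacuously, and the ``only if'' direction fails as literally stated. Your parenthetical remark that the non-tree member of $\mathcal{F}$ ``must be connected'' correctly identifies the needed hypothesis but does not follow from the stated assumptions; you should either add the standing assumption that $\mathcal{F}$ consists of connected graphs (which is what the paper implicitly uses, and all that Theorem~\ref{th:COTCOT} needs) or restrict the contrapositive to that case explicitly.
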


\section{Main Theorem}\label{sec:main}

In this section, we state our main result that provides a general framework for proving impossibility results in TVGs. First, we introduce in Section \ref{sub:spaces} some tools needed for the proof of our theorem. Namely, we prove that TVGs and executions sets may be seen as metric spaces with useful topological properties. Then, we prove our main result in Section \ref{sub:convergence}.

\subsection{TVG and Output Spaces}\label{sub:spaces}

\paragraph{TVG Space.} For a given time domain $\mathbb{T}$, a given static graph $(V,E)$ and a given latency function $\zeta$, let us consider the set $\mathcal{G}_{(V,E),\mathbb{T},\zeta}$ of all TVGs over $\mathbb{T}$ that admit $(V,E)$ as underlying graph and $\zeta$ as latency function. For the sake of clarity, we will omit the subscript $(V,E),\mathbb{T},\zeta$ and simply denote this set by $\mathcal{G}$. Remark that two distinct TVGs of $\mathcal{G}$ can be distinguished only by their presence function. For any TVG $g$ in $\mathcal{G}$, let us denote its presence function by $\rho_g$. We define now the following application $d_\mathcal{G}$ over $\mathcal{G}$:
\[\begin{array}{rrcl}
d_\mathcal{G} : & \mathcal{G}\times\mathcal{G} & \longrightarrow & [0,1] \\
    & (g,g')                       & \mapsto &
\left\{\begin{array}{cl}
0 & \text{if } g=g'\\
2^{-\lambda} & \text{else, with } \lambda = \text{Sup }\{ t\in\mathbb{T}|\forall t'\leq t,\forall e\in E,\rho_g(e,t') = \rho_{g'}(e,t')\}
\end{array}\right.
\end{array}\]

\begin{lemma}\label{lem:dgultrametric}
The application $d_\mathcal{G}$ is an ultrametric over $\mathcal{G}$, \emph{i.e.}
\begin{enumerate}
\item $\forall (g,g')\in\mathcal{G}^2, d_\mathcal{G}(g,g')=0 \Leftrightarrow g=g'$
\item $\forall (g,g')\in\mathcal{G}^2, d_\mathcal{G}(g,g')=d_\mathcal{G}(g',g)$
\item $\forall (g,g',g'')\in\mathcal{G}^3, d_\mathcal{G}(g,g'') \leq max(d_\mathcal{G}(g,g'), d_\mathcal{G}(g',g''))$
\end{enumerate}
\end{lemma}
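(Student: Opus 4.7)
My plan is to verify the three ultrametric axioms directly, treating the two branches of the piecewise definition carefully. Throughout, I will write $\lambda(g,g') := \sup A(g,g')$ where $A(g,g') := \{t \in \mathbb{T} : \forall t' \leq t,\ \forall e \in E,\ \rho_g(e,t') = \rho_{g'}(e,t')\}$, so that $d_\mathcal{G}(g,g') = 2^{-\lambda(g,g')}$ in the else branch. The proof then amounts to transferring statements about agreement of presence functions, via the strictly decreasing map $x \mapsto 2^{-x}$, into statements about the metric.

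For separation (property 1), the forward direction is immediate from the first branch. For the converse, $d_\mathcal{G}(g,g') = 0$ together with $g \neq g'$ would place us in the else branch and force $2^{-\lambda(g,g')} = 0$, \ie $\lambda(g,g') = +\infty$; but then $\rho_g$ and $\rho_{g'}$ coincide on all of $\mathbb{T}$, and since TVGs in $\mathcal{G}$ share the same $(V,E)$, $\mathbb{T}$, and $\zeta$, this forces $g = g'$, a contradiction. Symmetry (property 2) is immediate: both branches of the definition are manifestly symmetric in the two arguments, as equality of TVGs is symmetric and the set $A(g,g')$ does not depend on the order of its arguments.

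The substantive step is the strong triangle inequality. The key structural fact is that $A(\cdot,\cdot)$ is downward closed (if equality of presence functions holds for all $t' \leq t$, it holds for all $t' \leq s$ whenever $s \leq t$), and that pointwise equality of presence functions is transitive. Together these give the essential combinatorial observation: for any $t < \min(\lambda(g,g'),\lambda(g',g''))$, the memberships $t \in A(g,g')$ and $t \in A(g',g'')$ combine to yield $t \in A(g,g'')$. Taking the supremum over such $t$ gives $\lambda(g,g'') \geq \min(\lambda(g,g'),\lambda(g',g''))$, and applying $x \mapsto 2^{-x}$ flips the $\min$ into a $\max$ to produce the desired inequality. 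The degenerate cases where at least one pair coincides (so the corresponding distance vanishes) are handled by inspection: if $g = g''$ the inequality is trivial, and if $g = g'$ or $g' = g''$ it reduces to one of the other two distances on the right-hand side.

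The main place one might stumble is not conceptual but a matter of conventions at the extremes of $\lambda$. The set $A(g,g')$ can be empty (when the presence functions already disagree at the earliest moment of $\mathbb{T}$), in which case I would adopt the convention $\lambda = 0$, consistent with distance $1$; and $\lambda = +\infty$ arises only when $g = g'$, which lives in the first branch. With these boundary values fixed, the chain of inequalities in the triangle step goes through uniformly, and no case analysis on whether $\lambda$ is attained as a maximum versus approached as a supremum is needed.
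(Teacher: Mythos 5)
Your proof is correct and follows essentially the same route as the paper's: properties 1 and 2 by inspection of the definition, and the strong triangle inequality by combining agreement of the presence functions up to $\min(\lambda(g,g'),\lambda(g',g''))$ via transitivity and downward closure, then applying $x\mapsto 2^{-x}$ to turn the $\min$ into a $\max$. Your treatment is in fact slightly more careful than the paper's, which dismisses the first two properties and the degenerate/boundary cases without comment.
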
 

\begin{proof}
The two first properties follow directly from the definition of $d_\mathcal{G}$. 

To prove the third one, let $g$, $g'$, and $g''$ be three TVGs of $\mathcal{G}$. Assume that $d_\mathcal{G}(g,g')=2^{-\lambda'}$ and $d_\mathcal{G}(g',g'')=2^{-\lambda''}$ and let be $\lambda=min(\lambda',\lambda'')$. Then, by definition of $d_\mathcal{G}$, we have: $\forall t<\lambda',\forall e\in E,\rho_g(e,t)=\rho_{g'}(e,t)$ and $\forall t<\lambda'',\forall e\in E,\rho_{g'}(e,t)=\rho_{g''}(e,t)$. We can deduce that $\forall t<\lambda,\forall e\in E,\rho_g(e,t)=\rho_{g''}(e,t)$, that means that $d_\mathcal{G}(g,g'')\leq 2^{-\lambda}$.

On the other hand, we have: $max(d_\mathcal{G}(g,g'),d_\mathcal{G}(g',g''))=max(2^{-\lambda'},2^{-\lambda''})=2^{-\lambda}$. In conclusion, $d_\mathcal{G}(g,g'')\leq max(d_\mathcal{G}(g,g'),d_\mathcal{G}(g',g''))$, that ends the proof.
\end{proof}

In other words, we can consider $(\mathcal{G}, d_\mathcal{G})$ as a metric space (an ultrametric is a particular case
of metric) and associate to $(\mathcal{G}, d_\mathcal{G})$ the canonical topology, \emph{i.e.} the set of all open
balls induced by $d_\mathcal{G}$ over $\mathcal{G}$. This topological space have the following property that is useful
in the following.

\begin{lemma}\label{lem:gcomplete}
The metric space $(\mathcal{G}, d_\mathcal{G})$ is complete, \emph{i.e.} a sequence converges in $\mathcal{G}$ if and only if this sequence is Cauchy\footnote{Recall that a Cauchy sequence in a metric space $(S,d_S)$ is a sequence $(u_n)_{n\in\mathbb{N}}$ of $S$ whose oscillation converges to $0$. More formally, $\forall \epsilon\in\mathbb{R}^{*+},\exists k\in\mathbb{N}, \forall i\in\mathbb{N}, d_S(u_{k},u_{k+i})<\epsilon$}.
\end{lemma}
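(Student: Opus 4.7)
The plan is to prove both directions separately. The implication \emph{convergent $\Rightarrow$ Cauchy} is standard in any metric space (via the triangle inequality), so the substance of the lemma lies in completeness: every Cauchy sequence in $(\mathcal{G}, d_\mathcal{G})$ converges in $\mathcal{G}$. I therefore focus on this direction.

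Let $(g_n)_{n \in \mathbb{N}}$ be Cauchy in $(\mathcal{G}, d_\mathcal{G})$. The key observation is that small distances encode agreement of presence functions on long initial segments of $\mathbb{T}$: concretely, $d_\mathcal{G}(g, g') \leq 2^{-T}$ forces the $\lambda$ of the definition to satisfy $\lambda \geq T$, and by the supremum, $\rho_g(e, t') = \rho_{g'}(e, t')$ for every $e \in E$ and every $t' < T$. Thus for each $T \in \mathbb{T}$, applying the Cauchy property with $\epsilon = 2^{-T}$ yields an index $N_T$ such that all $g_n$ with $n \geq N_T$ share the same presence function on $[0, T)$.

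Next I would construct the candidate limit $g^\star = (V, E, \mathbb{T}, \rho^\star, \zeta, \phi)$ pointwise in time, setting $\rho^\star(e, t) := \rho_{g_{N}}(e, t)$ for any $N \geq N_{T}$ with $T > t$. Well-definedness is immediate from the previous paragraph: any two eligible indices agree at $t$. Since $g^\star$ shares the fixed data $(V, E, \mathbb{T}, \zeta, \phi)$ with every element of $\mathcal{G}$, it belongs to $\mathcal{G}$.

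Finally, I would verify $g_n \to g^\star$ in $d_\mathcal{G}$. Given $\epsilon > 0$, pick $T$ with $2^{-T} < \epsilon$; for every $n \geq N_{T+1}$, $g_n$ and $g^\star$ coincide on $[0, T+1)$, so $d_\mathcal{G}(g_n, g^\star) \leq 2^{-(T+1)} < \epsilon$. The main obstacle I anticipate is purely pedantic: because $\lambda$ is defined as a supremum, one must distinguish ``$\leq$'' from ``$<$'' and pick thresholds strictly above $T$ in order to extract genuine agreement at time $T$ itself from a bound of the form $d_\mathcal{G}(g_n, g_m) \leq 2^{-T}$. This is a technical annoyance rather than a conceptual difficulty; the heart of the argument is the ultrametric-style pointwise construction of the limit, which exploits the fact that the Cauchy condition forces the sequence $(g_n)$ to stabilise on any bounded temporal prefix.
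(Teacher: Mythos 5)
Your proof is correct and takes essentially the same route as the paper's: both reduce to showing every Cauchy sequence converges, construct the limit TVG pointwise by exploiting that the Cauchy condition forces the presence functions to stabilise on every bounded temporal prefix, and then verify convergence directly from the definition of $d_\mathcal{G}$. Your write-up is in fact slightly more careful than the paper's on the well-definedness of the pointwise limit and on the $\leq$ versus $<$ issue arising from the supremum in the definition of $\lambda$.
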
 

\begin{proof}
Recall that, by definition of convergence, any convergent sequence is Cauchy. Hence, let $(g_n)_{n\in\mathbb{N}}$ be a Cauchy sequence in $\mathcal{G}$. We are going to prove that $(g_n)_{n\in\mathbb{N}}$ converges in $\mathcal{G}$. By definition of a Cauchy sequence, we have: $\forall \epsilon\in\mathbb{R}^{*+},\exists k\in\mathbb{N}, \forall i\in\mathbb{N}, d_\mathcal{G}(g_{k},g_{k+i})<\epsilon$. In particular, we have: $\forall\lambda\in\mathbb{T},\exists k\in\mathbb{N}, \forall i\in\mathbb{N}, d_\mathcal{G}(g_{k},g_{k+i})<2^{-\lambda}$. 

In the other hand, by definition of $d_\mathcal{G}$, we know that the existence of $\lambda_{(k,i)}\in\mathbb{T}$ such that $d_\mathcal{G}(g_k,g_{k+i})<2^{-\lambda_{(k,i)}}$ for $k\in\mathbb{N}$ and $i\in\mathbb{N}$ means that $\forall t<\lambda_{(k,i)},\forall e\in E,\rho_{g_k}(e,t)=\rho_{g_{k+i}}(e,t)$. Hence, we have:  $\forall\lambda\in\mathbb{T},\exists k\in\mathbb{N}, \forall i\in\mathbb{N},\forall t<\lambda,\forall e\in E,\rho_{g_k}(e,t)=\rho_{g_{k+i}}(e,t)$. Let $g_\omega\in\mathcal{G}$ be the TVG defined by $\forall\lambda\in\mathbb{T},\forall e\in E, \rho_{g_\omega}(e,\lambda)=\rho_{g_{k}}(e,\lambda)$.

 Let $\epsilon\in\mathbb{R}^{*+}$ and $\lambda$ be the smallest integer such that $2^{-\lambda}<\epsilon$. Then, we know that $\exists k\in\mathbb{N},\forall i\in\mathbb{N},\forall t<\lambda,\forall e\in E,\rho_{g_{k+i}}(e,t)=\rho_{g_k}(e,t)=\rho_{g_\omega}(e,t)$. We can deduce that: $\forall i\in\mathbb{N},d_\mathcal{G}(g_k,g_\omega)\leq2^{-\lambda}<\epsilon$. In other words, $(g_n)_{n\in\mathbb{N}}$ converges to $g_\omega\in\mathcal{G}$, that proves the completeness of $(\mathcal{G}, d_\mathcal{G})$.
\end{proof}

\paragraph{Output Space.} For a given algorithm $\mathcal{A}$ and a given TVG $g$, let us define the $(\mathcal{A},g)$-output as the function that associates to any time $t\in\mathbb{T}$ the state of $g$ at time $t$ when it executes $\mathcal{A}$. We say that $g$ is the supporting TVG of this output. Let us consider the set $\mathcal{O}_{\mathcal{A},\mathcal{G}}$ of all $(\mathcal{A},g)$-outputs over all TVGs $g$ of $\mathcal{G}$. For the sake of clarity, we will omit the subscript $\mathcal{A},\mathcal{G}$ and simply denote this set by $\mathcal{O}$. Remark that two distinct outputs of $\mathcal{O}$ can be distinguished only by their supporting TVG. For any output $o$ in $\mathcal{O}$, let us denote its supporting TVG by $g_o$. We define now the following application $d_\mathcal{O}$ over $\mathcal{O}$:

\[\begin{array}{rrcl}
d_\mathcal{O} : & \mathcal{O}\times\mathcal{O} & \longrightarrow & [0,1] \\
    & (o,o')                       & \mapsto &
\left\{\begin{array}{cl}
0 & \text{if } o=o'\\
2^{-\lambda} & \text{else, with } \lambda = \text{Sup }\{ t\in\mathbb{T}|\forall t'\leq t,o(t') = o'(t')\}
\end{array}\right.
\end{array}\]

Due to the similarity between the definition of $d_\mathcal{G}$ and $d_\mathcal{O}$, we can easily prove the following result:

\begin{lemma}\label{lem:doultrametric}
The application $d_\mathcal{O}$ is an ultrametric over $\mathcal{O}$.
\end{lemma}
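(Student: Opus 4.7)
The plan is to imitate, almost verbatim, the proof of Lemma~\ref{lem:dgultrametric}, since the definition of $d_\mathcal{O}$ is structurally identical to that of $d_\mathcal{G}$: both measure a distance by looking at the supremum of times on which two objects (presence functions in one case, output functions in the other) coincide, then apply the map $\lambda\mapsto 2^{-\lambda}$. The only difference is that $d_\mathcal{O}$ compares values $o(t')$ of time-indexed state functions, whereas $d_\mathcal{G}$ compares values $\rho_g(e,t')$ ranging over all edges $e\in E$ in addition to time. The quantification over $e$ plays no role in the ultrametric argument, so the transposition is immediate.

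Concretely, I would first dispatch the two easy axioms. The identity of indiscernibles holds because the definition of $d_\mathcal{O}$ sets the distance to $0$ exactly when $o=o'$. Symmetry is evident from the symmetric role of $o$ and $o'$ inside the supremum $\mathrm{Sup}\{t\in\mathbb{T}\mid\forall t'\leq t,\ o(t')=o'(t')\}$.

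The substantive step is the strong triangle inequality. Given three outputs $o, o', o''$ in $\mathcal{O}$, I would write $d_\mathcal{O}(o,o')=2^{-\lambda'}$ and $d_\mathcal{O}(o',o'')=2^{-\lambda''}$, set $\lambda=\min(\lambda',\lambda'')$, and observe that by definition $o(t)=o'(t)$ for all $t<\lambda'$ and $o'(t)=o''(t)$ for all $t<\lambda''$. Transitivity of equality then yields $o(t)=o''(t)$ for all $t<\lambda$, so by definition of the supremum $d_\mathcal{O}(o,o'')\leq 2^{-\lambda}=\max(2^{-\lambda'},2^{-\lambda''})=\max(d_\mathcal{O}(o,o'),d_\mathcal{O}(o',o''))$. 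The degenerate cases where $o=o'$ or $o'=o''$ (so that one of $\lambda',\lambda''$ is $+\infty$, and the corresponding distance is $0$) are handled by the same argument, reading the supremum conventions in the obvious way.

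I do not foresee any real obstacle: the only thing worth being slightly careful about is the meaning of the supremum when the set of agreeing times is unbounded or when $o=o'$, but the definition already covers these cases by setting the distance to $0$ in the equality case. Hence the proof reduces to a single short paragraph appealing to the same argument as in Lemma~\ref{lem:dgultrametric}, and one could legitimately state it as ``by the same argument as the proof of Lemma~\ref{lem:dgultrametric}, mutatis mutandis''.
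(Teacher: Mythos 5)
Your proposal is correct and follows exactly the route the paper intends: the paper omits the proof precisely because, as it notes, the definition of $d_\mathcal{O}$ mirrors that of $d_\mathcal{G}$, so the argument of Lemma~\ref{lem:dgultrametric} transfers verbatim. Your transposition of the strong triangle inequality (and your remark that the quantification over edges plays no role) is exactly the intended ``mutatis mutandis'' argument.
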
 

As previously, we can consider $(\mathcal{O}, d_\mathcal{O})$ as a metric space, associate to $(\mathcal{O}, d_\mathcal{O})$ the canonical topology and prove the following result:

\begin{lemma}\label{lem:ocomplete}
The metric space $(\mathcal{O}, d_\mathcal{O})$ is complete.
\end{lemma}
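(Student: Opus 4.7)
The strategy is to reduce the completeness of $(\mathcal{O}, d_\mathcal{O})$ to that of $(\mathcal{G}, d_\mathcal{G})$ via the correspondence $o \mapsto g_o$ between an output and its supporting TVG, exploiting the fact that a configuration at time $t$ includes the static snapshot $g_i$ at that time. Concretely, let $(o_n)_{n\in\mathbb{N}}$ be a Cauchy sequence in $\mathcal{O}$. Coincidence of two outputs on $[0,\lambda)$ forces coincidence of the corresponding snapshots, and hence of the presence functions, on $[0,\lambda)$, so one obtains the key inequality $d_\mathcal{G}(g_{o_n}, g_{o_m}) \leq d_\mathcal{O}(o_n, o_m)$ for all $n,m$. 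Therefore the sequence of supporting TVGs $(g_{o_n})$ is Cauchy in $\mathcal{G}$, and by Lemma~\ref{lem:gcomplete} it converges to some $g_\omega \in \mathcal{G}$.

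Next, I would take $o_\omega$ to be the $(\mathcal{A},g_\omega)$-output, which lies in $\mathcal{O}$ by construction, and show that $(o_n) \to o_\omega$. Given $\epsilon > 0$, pick the smallest $\lambda$ with $2^{-\lambda} < \epsilon$ and $N \in \mathbb{N}$ such that, for every $n \geq N$, the presence functions of $g_{o_n}$ and $g_\omega$ coincide on $[0,\lambda)$. Because $\mathcal{A}$ is deterministic and the configuration at time $t$ is entirely determined by the TVG's behaviour on $[0,t]$, the outputs $o_n$ and $o_\omega$ must agree on $[0,\lambda)$, giving $d_\mathcal{O}(o_n, o_\omega) \leq 2^{-\lambda} < \epsilon$. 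This establishes convergence and hence completeness.

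The delicate step is the last one: propagating prefix-agreement from the supporting TVGs to the executions. It is in some sense a degenerate case of the paper's main theorem in Section~\ref{sub:convergence}, since both push convergence through the map $g \mapsto (\mathcal{A},g)$-output. It relies on (i) the determinism of $\mathcal{A}$, and (ii) the modelling assumption that process states instantaneously reflect the current topology, so that no latent difference in the TVG can remain hidden inside a matching configuration. Once this is granted, the remainder of the argument is a straightforward transcription of the proof of Lemma~\ref{lem:gcomplete}, with outputs and their supports playing the roles previously played by TVGs and their presence functions.
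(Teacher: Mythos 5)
Your proof is correct in substance but takes a genuinely different route from the paper's. The paper gives no explicit proof: it asserts that the result follows "as previously," i.e., by transcribing the proof of Lemma~\ref{lem:gcomplete} with outputs in place of TVGs — take a Cauchy sequence $(o_n)$, patch together its ever-growing common prefixes to define a limit function $o_\omega$, and check convergence. You instead reduce to the completeness of $(\mathcal{G},d_\mathcal{G})$ through the supporting-TVG map, obtain $g_\omega$ as the limit of $(g_{o_n})$, and define $o_\omega$ as the $(\mathcal{A},g_\omega)$-output. This buys you something the direct transcription quietly skips: in the patching argument one must still verify that the patched function is actually realized as the $(\mathcal{A},g)$-output of \emph{some} $g\in\mathcal{G}$ (membership in $\mathcal{O}$ is not automatic for an arbitrary prefix-limit of functions), whereas your $o_\omega$ lies in $\mathcal{O}$ by construction. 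The price is your key inequality $d_\mathcal{G}(g_{o_n},g_{o_m})\leq d_\mathcal{O}(o_n,o_m)$, which requires that agreement of outputs on $[0,\lambda)$ forces agreement of the presence functions there; this holds under the paper's definition of a configuration (which contains the snapshot $g_i$) and its implicit assumption that each output has a well-defined supporting TVG, but it is worth stating explicitly, since if the output were read as only the vector of process states the inequality could fail. Your final step (determinism propagates TVG-prefix agreement to output-prefix agreement, i.e., $d_\mathcal{O}(o_n,o_\omega)\leq d_\mathcal{G}(g_{o_n},g_\omega)$) is the same inequality re-derived inside the proof of Theorem~\ref{th:main}; since you reprove it directly rather than invoking the theorem (whose proof relies on this very lemma), there is no circularity.
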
 

\subsection{Convergence of Sequences of TVGs}\label{sub:convergence}

We are now ready to state our main result. Intuitively, this theorem ensures us that, if we take a sequence of TVGs with ever-growing common prefixes, then the sequence of corresponding outputs also converges. Moreover, we are able to describe the output to which it converges as the output that corresponds to the TVG that shares all commons prefixes of our TVGs sequence. This result is useful since it allows us to construct counter-example in the context of impossibility results. Indeed, it is sufficient to construct a TVG sequence (with ever-growing common prefixes) and to prove that their corresponding outputs violates the specification of the problem for ever-growing time to exhibit an execution that violates infinitely often the specification of the problem. More formally, we have:
 
\begin{theorem}\label{th:main}
For any deterministic algorithm $\mathcal{A}$, if a sequence $(g_n)_{n\in\mathbb{N}}$ of $\mathcal{G}$ converges to a given $g_\omega\in\mathcal{G}$, then the sequence $(o_n)_{n\in\mathbb{N}}$ of the $(\mathcal{A},g_n)$-outputs converges to $o_\omega\in\mathcal{O}$. Moreover, $o_\omega$ is the $(\mathcal{A},g_\omega)$-output.
\end{theorem}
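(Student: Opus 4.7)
The plan is to exploit the ultrametric structure already established: since both $d_\mathcal{G}$ and $d_\mathcal{O}$ measure distance via the length of the longest common prefix (of presence functions on one side, of state-functions on the other), it suffices to show that sharing a presence-function prefix of length $\lambda$ forces sharing an output prefix of length $\lambda$. Convergence in $\mathcal{G}$ then transports mechanically to convergence in $\mathcal{O}$, and the identification of the limit follows from the construction.

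Concretely, fix $\epsilon>0$ and pick $\lambda\in\mathbb{T}$ with $2^{-\lambda}<\epsilon$. By hypothesis there exists $N$ such that for every $n\geq N$, $d_\mathcal{G}(g_n,g_\omega)\leq 2^{-\lambda}$, i.e.\ $\rho_{g_n}(e,t)=\rho_{g_\omega}(e,t)$ for every edge $e$ and every $t\leq\lambda$. The core of the argument is then a \emph{causality lemma}: for any two TVGs $g,g'\in\mathcal{G}$ whose presence functions agree on $[0,\lambda]$, and any deterministic algorithm $\mathcal{A}$, the $(\mathcal{A},g)$- and $(\mathcal{A},g')$-outputs coincide on $[0,\lambda]$. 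I would prove this by induction on the execution step index $k$: the initial configuration is fixed by $\mathcal{A}$ alone, and assuming the first $k$ configurations match, the event producing $\gamma_{k+1}$ from $\gamma_k$ is either (i) a topological change at some $t_{k+1}\leq\lambda$, which is determined by $\rho$ on $[0,\lambda]$ and hence agrees on both sides, or (ii) an internal/message-related transition whose triggering depends only on past states, on already-observed $\rho$-values, and on the shared $\zeta$ and $\phi$; in either case the new configuration is identical in the two executions. Applying this at $\lambda$ yields $d_\mathcal{O}(o_n,o_\omega)\leq 2^{-\lambda}<\epsilon$, so $(o_n)_{n\in\mathbb{N}}$ converges, and the limit is necessarily the $(\mathcal{A},g_\omega)$-output by its very definition.

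The main obstacle is making the causality lemma rigorous despite the asynchrony of the model. In particular, one must check that the \textbf{Send\_retry} primitive and the bookkeeping of in-flight messages at a step where an edge disappears genuinely depend only on data accessible up to the current time — otherwise two TVGs sharing a prefix could, in principle, produce different deliveries through hidden scheduling differences. Because \textbf{Send\_retry} is specified entirely in terms of $\rho$, $\zeta$, and past configurations, and because $\mathcal{A}$ is explicitly deterministic, the induction does go through; the only delicate bookkeeping concerns the convention for the state ``at'' the boundary $\lambda$ when a step straddles it, which is handled by noting that the supremum in the definition of $d_\mathcal{G}$ gives agreement on the \emph{open} prefix up to $\lambda$ and choosing $\lambda$ strictly inside this set of agreement times.
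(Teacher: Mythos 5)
Your proof is correct, and it reaches the conclusion by a more direct route than the paper. The paper's argument has two stages: it first shows that $(o_n)_{n\in\mathbb{N}}$ is a Cauchy sequence (via the inequality $d_\mathcal{O}(o_k,o_{k+i})\leq d_\mathcal{G}(g_k,g_{k+i})$), invokes the completeness of $(\mathcal{O},d_\mathcal{O})$ to obtain some limit $o$, and only then identifies $o$ with $o_\omega$ using the ultrametric inequality $d_\mathcal{O}(o,o_\omega)\leq\max\bigl(d_\mathcal{O}(o,o_n),d_\mathcal{O}(o_n,o_\omega)\bigr)$ together with a second application of the same comparison, $d_\mathcal{O}(o_n,o_\omega)\leq d_\mathcal{G}(g_n,g_\omega)$. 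You instead prove that last inequality directly and conclude $d_\mathcal{O}(o_n,o_\omega)\to 0$, which yields convergence to $o_\omega$ in one step and makes the completeness of $\mathcal{O}$ (the paper's Lemma~\ref{lem:ocomplete}) unnecessary for this theorem; this is a genuine economy, since the paper's identification step already contains everything needed and its Cauchy-plus-completeness stage is, strictly speaking, redundant. Both arguments rest on the same pivotal fact --- that agreement of presence functions on a time prefix forces agreement of outputs on that prefix --- which the paper dispatches in one line (``As $\mathcal{A}$ is deterministic, we can deduce that $\forall t<\lambda_{(k,i)}, o_k(t)=o_{k+i}(t)$''), whereas you isolate it as an explicit causality lemma and sketch its proof by induction on execution steps, including the treatment of \textbf{Send\_retry}, in-flight messages, and the boundary behaviour at the supremum defining $d_\mathcal{G}$. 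Your version is therefore both leaner in its topological scaffolding and more explicit about the one step where the model's semantics actually enter; the paper's version buys a uniform presentation in which the completeness lemmas it developed for the framework are exercised.
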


\begin{proof}
Let $\mathcal{A}$ be a deterministic algorithm and $(g_n)_{n\in\mathbb{N}}$ be a sequence of $\mathcal{G}$ that converges to a given $g_\omega\in\mathcal{G}$. Then, let $(o_n)_{n\in\mathbb{N}}$ be the sequence of the $(\mathcal{A},g_n)$-outputs.

First, we are going to prove that $(o_n)_{n\in\mathbb{N}}$ converges in $\mathcal{O}$. As $\mathcal{O}$ is complete (see Lemma \ref{lem:ocomplete}), it is sufficient to prove that $(o_n)_{n\in\mathbb{N}}$ is a Cauchy sequence to obtain this result. Let $\epsilon\in\mathbb{R}^{*+}$. As $\mathcal{G}$ is also complete (see Lemma \ref{lem:gcomplete}), we know that $(g_n)_{n\in\mathbb{N}}$ is a Cauchy sequence and hence, we have by definition: $\exists k_\epsilon\in\mathbb{N}, \forall i\in\mathbb{N}, d_\mathcal{G}(g_{k_\epsilon},g_{k_\epsilon+i})<\epsilon$. 

In the other hand, by definition of $d_\mathcal{G}$, we know that the existence of $\lambda_{(k,i)}\in\mathbb{T}$ such that $d_\mathcal{G}(g_k,g_{k+i})=2^{-\lambda_{(k,i)}}$ for $k\in\mathbb{N}$ and $i\in\mathbb{N}$ means that $\forall t<\lambda_{(k,i)},\forall e\in E,\rho_{g_k}(e,t)=\rho_{g_{k+i}}(e,t)$. As $\mathcal{A}$ is deterministic, we can deduce that $\forall t<\lambda_{(k,i)},o_k(t)=o_{k+i}(t)$ (since $g_{o_n}=g_n$ for any $n\in\mathbb{N}$ by construction of $(o_n)_{n\in\mathbb{N}}$). Then, the definition of $d_\mathcal{O}$ implies that $d_\mathcal{O}(o_k,o_{k+i})\leq 2^{-\lambda_{(k,i)}}$. In other words, we can deduce that we have $\forall k\in\mathbb{N}, \forall i\in\mathbb{N},d_\mathcal{O}(o_k,o_{k+i})\leq d_\mathcal{G}(g_k,g_{k+i})$.

We can conclude that $\exists k_\epsilon\in\mathbb{N}, \forall i\in\mathbb{N}, d_\mathcal{O}(o_{k_\epsilon},o_{k_\epsilon+i})<\epsilon$. In conclusion, $(o_n)_{n\in\mathbb{N}}$ is a Cauchy sequence and then converges to $o\in\mathcal{O}$.

Let $o_\omega$ be the $(\mathcal{A},g_\omega)$-output. Then, we are going to prove that $o=o_\omega$. As $d_\mathcal{O}$ is an ultrametric (see Lemma \ref{lem:doultrametric}), we know that $0\leq d_\mathcal{O}(o,o_\omega) \leq \max(d_\mathcal{O}(o, o_n), d_\mathcal{O}(o_n, o_\omega))$ for any $n\in\mathbb{N}$. By that precedes, the sequence $(d_\mathcal{O}(o, o_n))_{n\in\mathbb{N}}$ converges to $0$. Due to the determinism of $\mathcal{A}$ and the completeness of $\mathcal{G}$ and $\mathcal{O}$, we can prove by a similar reasoning as above that $d_\mathcal{O}(o_n,o_\omega) \leq d_\mathcal{G}(g_n, g_\omega)$ for any $n\in\mathbb{N}$. The convergence of $(g_n)_{n\in\mathbb{N}}$ to $g_\omega$ implies that the sequence $(d_\mathcal{G}(g_n, g_\omega))_{n\in\mathbb{N}}$ converges to $0$. Then, the sequence $(d_\mathcal{O}(o_n, o_\omega))_{n\in\mathbb{N}}$ also converges to $0$ (since $d_\mathcal{O}(o_n,o_\omega)\geq 0$ for any $n\in\mathbb{N}$). Then, the sequence $(\max(d_\mathcal{O}(o, o_n), d_\mathcal{O}(o_n, o_\omega)))_{n\in\mathbb{N}}$ converges to $0$ that implies that $d_\mathcal{O}(o,o_\omega)=0$. As $d_\mathcal{O}$ is a metric, we can conclude that $o=o_\omega$, that ends the proof.
\end{proof}

\section{Impossibility of Eventual Underlying Graph Computation \label{sec:COT}}

In this section, we present an application of our main theorem by proving a natural impossibility result. Namely, we prove that it is impossible to compute the underlying graph of a connected-over-time TVG with a deterministic algorithm. Intuitively, this impossibility result comes from the fact that, with such an algorithm, no process is able to determine if, along its adjacent edges, there exists some eventual missing edges or not. The formal proof of this intuitive result is not as simple as one may think at first glance. 

Before presenting the impossibility result, we have to specify our problem. We say that a process $p$ outputs a value $v$ in a configuration $\gamma$ if one of its variable (called an output variable) has the value $v$ in $\gamma$.

\begin{specification}
An algorithm $\mathcal{A}$ satisfies the eventual underlying graph specification for a class of TVGs $\mathcal{C}$ if every execution $e=\gamma_0,\gamma_1,...$ on any TVG $g$ of $\mathcal{C}$ has a suffix $e_i=\gamma_i,\gamma_{i+1},...$ for a given $i\in\mathbb{N}$ such that each process outputs the eventual underlying graph of $g$ in any configuration of $e_i$.
\end{specification} 

We are now ready to prove the impossibility of eventual underlying graph in connected-over-time TVGs.

\begin{theorem}\label{th:COTCOT}
For any set of (static) graphs $\mathcal{F}$ that does not contain only trees, there exists no deterministic algorithm that satisfies the eventual underlying graph specification for $\mathcal{COT}|_\mathcal{F}$.
\end{theorem}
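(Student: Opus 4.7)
The plan is a proof by contradiction in the spirit of the discussion preceding Theorem~\ref{th:main}: I construct a Cauchy sequence $(g_n)_{n\in\mathbb{N}}$ in $\mathcal{COT}|_\mathcal{F}$ whose limit $g_\omega$ forces a purported algorithm $\mathcal{A}$ to flip its answer at arbitrarily late times, violating the specification. Assume $\mathcal{A}$ satisfies the specification. Since $\mathcal{F}$ is not contained in the class of trees, fix $f\in\mathcal{F}$ with a cycle and an edge $e^{*}$ lying on such a cycle, so that $f\setminus\{e^{*}\}$ is still connected. In every TVG of the construction I keep the edges of $f\setminus\{e^{*}\}$ permanently present with a fixed finite latency; this guarantees connected-over-time (the backbone alone certifies COT) and underlying graph exactly $f$, hence membership in $\mathcal{COT}|_\mathcal{F}$ by Lemma~\ref{lem:lem5}. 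The only remaining freedom is the presence schedule of $e^{*}$, and it is exactly enough to toggle the eventual underlying graph between $f$ and $f\setminus\{e^{*}\}$.

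Start with $g_0\in\mathcal{COT}|_\mathcal{F}$ in which $e^{*}$ is present infinitely often, so its eventual underlying graph is $f$; the specification yields a time $T_0$ after which every process outputs $f$ in the $(\mathcal{A},g_0)$-output $o_0$. Given $g_n$ and $T_n$, define $g_{n+1}$ to coincide with $g_n$ on $[0,T_n]$ and to reverse the long-run behaviour of $e^{*}$ past $T_n$: if $e^{*}$ was infinitely often present in $g_n$, make it vanish forever after $T_n$ in $g_{n+1}$; otherwise, make it reappear infinitely often after $T_n$. The eventual underlying graph of $g_{n+1}$ is then the opposite member of $\{f,f\setminus\{e^{*}\}\}$ from that of $g_n$. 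Because $\mathcal{A}$ is deterministic and $g_n,g_{n+1}$ agree on $[0,T_n]$, so do $o_n$ and $o_{n+1}$; in particular $o_{n+1}(T_n)$ equals the eventual underlying graph of $g_n$, which is the wrong answer for $g_{n+1}$, so the specification produces a strictly later convergence time $T_{n+1}>T_n$ for $o_{n+1}$.

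By construction $d_\mathcal{G}(g_n,g_{n+1})\leq 2^{-T_n}$, so $(g_n)$ is Cauchy in $(\mathcal{G},d_\mathcal{G})$ and converges by Lemma~\ref{lem:gcomplete} to some $g_\omega$ that agrees with each $g_n$ on $[0,T_n]$, retains the permanent backbone, and therefore lies in $\mathcal{COT}|_\mathcal{F}$. Theorem~\ref{th:main} then gives $o_n\to o_\omega$ with $o_\omega$ the $(\mathcal{A},g_\omega)$-output, and the inequality $d_\mathcal{O}(o_n,o_\omega)\leq d_\mathcal{G}(g_n,g_\omega)\leq 2^{-T_n}$ derived inside that theorem's proof forces $o_\omega(T_n)=o_n(T_n)$. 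Hence $o_\omega(T_n)$ equals the eventual underlying graph of $g_n$ for every $n$, a value which alternates between $f$ and $f\setminus\{e^{*}\}$; $o_\omega$ therefore takes both values at arbitrarily late times, contradicting the specification for $g_\omega$, whose eventual underlying graph is a single fixed object. The main point of care is the setup rather than the contradiction itself: no individual adversarial TVG can defeat $\mathcal{A}$, since $\mathcal{A}$ is free to tailor its convergence time to the TVG; the induction turns that liberty against itself by making each $T_n$ a commitment $\mathcal{A}$ cannot retract once we perturb the TVG past it, and the permanent backbone is what lets us assert both $g_n\in\mathcal{COT}|_\mathcal{F}$ and $g_\omega\in\mathcal{COT}|_\mathcal{F}$ without ever depending on the fluctuating edge $e^{*}$.
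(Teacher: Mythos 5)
Your overall strategy is the paper's: build a convergent sequence of TVGs in $\mathcal{COT}|_\mathcal{F}$ that forces $\mathcal{A}$ to change its mind about one edge at arbitrarily late times, then invoke Theorem~\ref{th:main} to transfer the oscillation to the limit TVG. But your construction of the sequence is genuinely different. The paper starts from an arbitrary $g\in\mathcal{COT}|_\mathcal{F}$ with an eventual missing edge $e$ (obtained via Lemma~\ref{lem:lem6}) and only ever inserts $e$ on \emph{finite} windows $]\eta_i,\alpha_i[$, where $\eta_i$ is the last time some process outputs $e$ and $\alpha_i$ is the first time after $\eta_i$ at which all processes output $e$ once $e$ is made permanently present; consequently every $g_i$ in the paper has the \emph{same} eventual underlying graph ($e$ stays eventually missing), and only the limit $g_\omega$, which contains infinitely many windows, exhibits an oscillating output. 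You instead anchor everything to an explicit always-present spanning backbone $f\setminus\{e^{*}\}$ (which buys membership in $\mathcal{COT}|_\mathcal{F}$ for free, without Lemma~\ref{lem:lem6}) and alternate the long-run fate of $e^{*}$ past the specification's convergence time $T_n$, so that the eventual underlying graphs of your $g_n$ genuinely alternate between $f$ and $f\setminus\{e^{*}\}$. Both routes reach the same contradiction; yours trades the paper's bookkeeping of $\eta_i$ and $\alpha_i$ for the simpler observation that a deterministic algorithm cannot retract an answer already given on a common prefix.

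One point needs tightening. Your Cauchy claim $d_\mathcal{G}(g_n,g_{n+1})\leq 2^{-T_n}$ and your final contradiction both require $T_n\to+\infty$, but you only establish $T_{n+1}>T_n$, and a strictly increasing sequence in $\mathbb{R}^{+}$ may converge to a finite limit --- in which case $(g_n)_{n\in\mathbb{N}}$ need not be Cauchy, and the oscillation of $o_\omega$ would be confined to a bounded time interval, which does not violate the suffix-based specification. The fix is immediate: the specification gives you \emph{some} convergence time and any larger time also works, so choose $T_n\geq\max(T_{n-1}+1,n)$. (The paper's proof silently relies on the analogous fact that $\eta_i\to+\infty$.) A second, more cosmetic remark: ``$\mathcal{F}$ does not contain only trees'' hands you a non-tree member of $\mathcal{F}$, and you should justify that you may take it \emph{connected} (a disconnected underlying graph induces no TVG of $\mathcal{COT}$ at all), hence containing a cycle edge $e^{*}$ whose removal preserves connectivity; the paper buries this point inside Lemma~\ref{lem:lem6}.
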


\begin{proof}
We define, for any TVG $g=(V,E,\mathcal{T},\rho,\zeta,\phi)$,  the TVG $g\oplus\{(e_1,\mathcal{T}_{e_1}),\ldots,(e_k,\mathcal{T}_{e_k})\}$ (with, for any $i\in\{0,\ldots,k\}$, $e_i\in E$ and $\mathcal{T}_{e_i}\subseteq\mathcal{T}$) as the TVG $(V,E,\mathcal{T},\rho',\zeta,\phi)$ with:
\[\rho'(e,t)=\begin{cases}
1 \text{ if } \exists i\in\{0,\ldots,k\}, e=e_i \text{ and } t\in\mathcal{T}_{e_i}\\
\rho(e,t) \text{ otherwise}
\end{cases}\]

By contradiction, assume that there exists a set of (static) graphs $\mathcal{F}$ that does not contain only trees such that there exists a deterministic algorithm $\mathcal{A}$ that satisfies the eventual underlying graph specification for $\mathcal{COT}|_\mathcal{F}$. In consequence, any process that executes $\mathcal{A}$ outputs a (static) graph at any time.

By Lemma \ref{lem:lem6}, we know that there exists $g\in\mathcal{COT}|_\mathcal{F}$ such that $g=(V,E,\mathcal{T},\rho,\zeta,\phi)$ admits at least one eventual missing edge $e$. We construct then a sequence $(g_n)_{n\in\mathbb{N}}$ of TVGs as follows. We set $g_0=g$ and we define inductively $g_i$ for any $i\in\mathbb{N}$ as follows---refer to Figure~\ref{fig:g_w}:

\begin{enumerate}
\item Consider the execution of $\mathcal{A}$ over $g_i$ and let $\eta_{i}\in\mathcal{T}\cup\{+\infty\}$ be the largest time where $e$ belongs to the graph outputted by some process of $V$ (remark that $\eta_i=+\infty$ if and only if $e$ belongs infinitely often to the outputted graph of at least one process);
\item Let $g'_i=g_i\oplus(e,\mathcal{T}\cap]\eta_i,+\infty[)$;
\item Consider the execution of $\mathcal{A}$ over $g'_i$ and let $\alpha_{i}\in\mathcal{T}\cup\{+\infty\}$ be the smallest time strictly greater than $\eta_i$ where $e$ belongs to the graph outputted by all process of $V$ (remark that $\alpha_i=+\infty$ if and only if $e$ never belongs simultaneously to the outputted graph of all processes $\eta_i=+\infty$);
\item Let $g_{i+1}=g_i\oplus(e,\mathcal{T}\cap]\eta_i,\alpha_i[)$.
\end{enumerate}

\begin{figure}
  \centering 
  \includegraphics[scale=0.41]{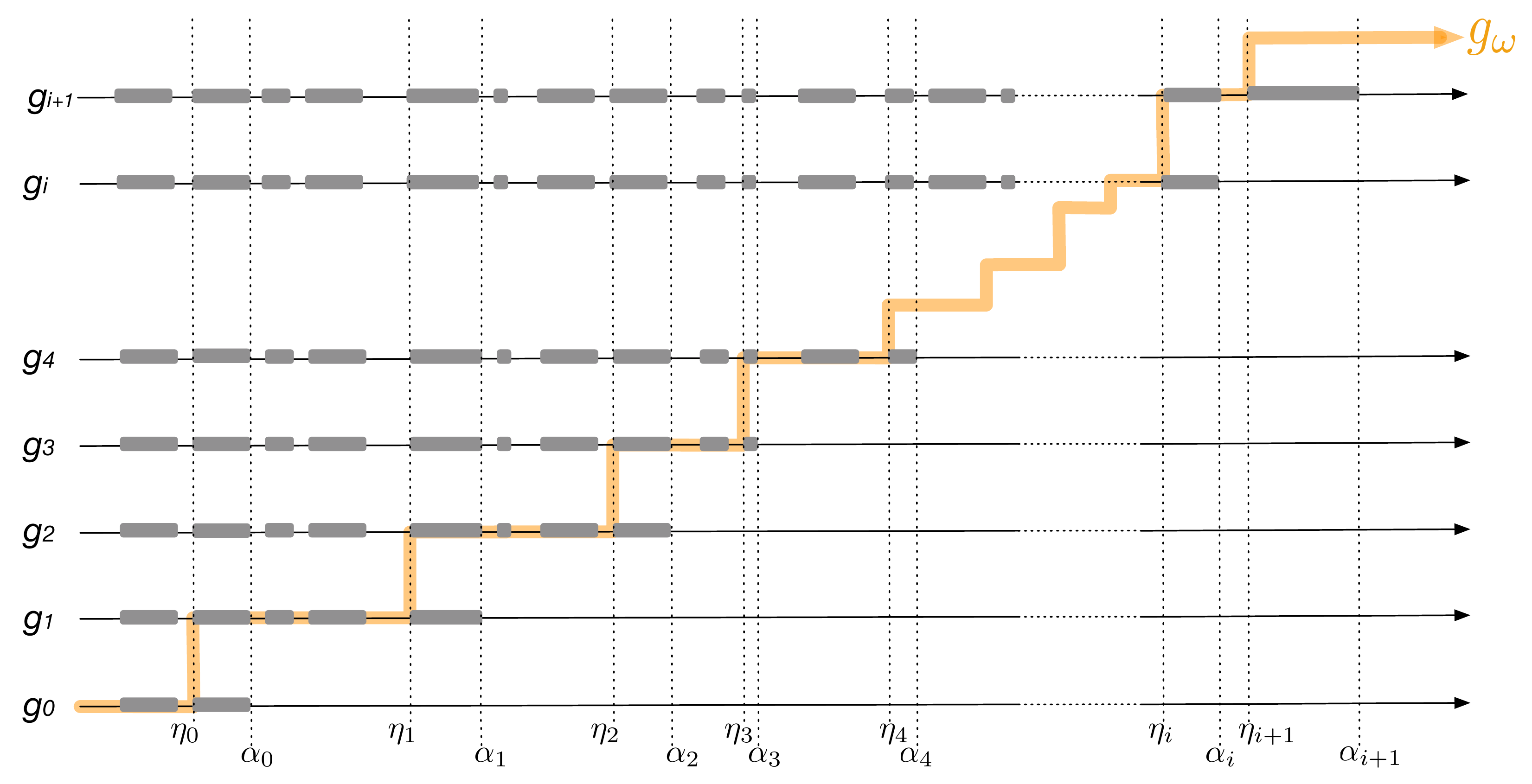}
  \caption{Construction of $(g_n)_{n\in\mathbb{N}}$ in the proof of Theorem~\ref{th:COTCOT}. Grey bold lines represent instants where $e$ belongs to the graph outputted by all process of $V$. \label{fig:g_w}}
\end{figure}

We can prove that, for any $i\in\mathbb{N}$, if $g_i$ belongs to $\mathcal{COT}|_\mathcal{F}$ and if $e$ is an eventual missing edge in $g_i$, then $\eta_i\neq+\infty$ and $\alpha_i\neq+\infty$. Indeed, assume that $e$ is an eventual missing edge in $g_i\in\mathcal{COT}|_\mathcal{F}$ for a given $i\in\mathbb{N}$. By definition, $e$ does not belong to $U^\omega_{g_i}$. As $\mathcal{A}$ satisfies the eventual underlying graph specification for $\mathcal{COT}|_\mathcal{F}$, we know that $e$ cannot belongs infinitely often to the outputted graph of a process in the execution of $\mathcal{A}$ over $g_i$, \ie $\eta_i\neq+\infty$. Then, as $e$ is not an eventual missing edge in $g'_i$ by construction, $e$ belongs to $U^\omega_{g'_i}$. By Lemma \ref{lem:lem5}, $g'_i$ belongs to $\mathcal{COT}|_\mathcal{F}$ since $g_i$ and $g'_i$ share the same underlying graph $U_g$. As $\mathcal{A}$ satisfies the eventual underlying graph specification for $\mathcal{COT}|_\mathcal{F}$, we know that $e$ belongs eventually to the outputted graph of all processes of $V$, \ie $\alpha_i\neq+\infty$.

We obtain that, for any $i\in\mathbb{N}$, if $g_i$ belongs to $\mathcal{COT}|_\mathcal{F}$ and if $e$ is an eventual missing edge in $g_i$, then $g_{i+1}$ belongs to $\mathcal{COT}|_\mathcal{F}$ and $e$ is an eventual missing edge in $g_{i+1}$. Indeed, $g_{i+1}$ belongs to $\mathcal{COT}|_\mathcal{F}$ by Lemma \ref{lem:lem5} (since $g_i$ and $g_{i+1}$ share the same underlying graph $U_g$). As we proved that $\eta_i\neq+\infty$ and $\alpha_i\neq+\infty$ when $e$ is an eventual missing edge in $g_i$, $g_{i+1}$ is obtained by adding $e$ during a finite amount of time to $g_i$, that implies that $e$ is an eventual missing edge in $g_{i+1}$.

Now, it is sufficient to note that $g$ belongs to $\mathcal{COT}|_\mathcal{F}$ by assumption and that $e$ is an eventual missing edge in $g_0=g$ by construction to obtain that $(g_n)_{n\in\mathbb{N}}$ is a sequence of $\mathcal{COT}|_\mathcal{F}$ such that $\eta_i\neq+\infty$ and $\alpha_i\neq+\infty$ for any $i\in\mathbb{N}$. Moreover, note that, for any $i\in\mathbb{N}$, $\eta_i<\alpha_i$ (by construction) and $\alpha_i<\eta_{i+1}$ (since $e$ belongs to the graph outputted by any process at time $\alpha_i$ in $g_{i+1}$ whereas $e$ does not belong to the graph outputted by any process at time $\eta_{i+1}$ in $g_{i+1}$). 

That allows us to define the following TVG: $g_\omega=g\oplus\{(e,\mathcal{T}\cap]\eta_i,\alpha_i[)|i\in\mathbb{N}\}$. Note that $U_{g_\omega}=U_g$ and then, by Lemma \ref{lem:lem5}, that $g_\omega$ belongs to $\mathcal{COT}|_\mathcal{F}$. Observe that, for any $k\in\mathbb{N}^*$, we have $d_\mathcal{G}(g_k,g_\omega)=2^{-\eta_k}$ by construction of $(g_n)_{n\in\mathbb{N}}$ and $g_\omega$. Thus, $(g_n)_{n\in\mathbb{N}}$ converges in $\mathcal{COT}|_\mathcal{F}$ to $g_\omega$.

We are now in measure to apply our main theorem (see Theorem \ref{th:main}) that states that the $(\mathcal{A},g_\omega)$-output is the limit of the sequence of the $(\mathcal{A},g_n)$-outputs. In other words, the $(\mathcal{A},g_\omega)$-output shares a prefix of length $\eta_i$ with the $(\mathcal{A},g_i)$-output for any $i\in\mathbb{N}$ (recall that the sequence of the $(\mathcal{A},g_n)$-outputs is Cauchy since it converges). That means that there exists infinitely many configurations in the execution of $\mathcal{A}$ on $g_\omega$ where $e$ belongs to the outputted graph of all process and infinitely many configurations in the execution of $\mathcal{A}$ on $g_\omega$ where $e$ does not belong to the outputted graph of any process, that contradicts the fact that $\mathcal{A}$ satisfies the eventual underlying graph specification for $\mathcal{COT}|_\mathcal{F}$ and ends the proof.
\end{proof}

\section{Conclusion}\label{sec:conc}

We gave a general framework for providing impossibility results in time-varying graphs. This framework is useful to legitimate informal arguments about convergence of sequences of objects in this context. We used the above result to prove that no deterministic algorithm exists to compute the underlying graph of any connected-over-time TVG. Our general framework is devoted to be used with a large number of problems in TVGs, \eg overlay construction.

\begin{small}
\bibliographystyle{plain}
\bibliography{IPL}
\end{small}

\end{document}